\documentclass[11pt, a4paper]{article}

\usepackage[margin=1in]{geometry}
\usepackage{amsmath, amssymb, amsthm}
\usepackage{hyperref}
\usepackage{graphicx}
\usepackage{bbm}

\hypersetup{
    colorlinks=true,
    linkcolor=blue,
    filecolor=magenta,      
    urlcolor=cyan,
}

\newtheorem{theorem}{Theorem}[section]
\newtheorem{proposition}[theorem]{Proposition}
\newtheorem{lemma}[theorem]{Lemma}
\newtheorem{corollary}[theorem]{Corollary}
\theoremstyle{definition}

\theoremstyle{remark}
\newtheorem{remark}[theorem]{Remark}

\numberwithin{equation}{section}
\numberwithin{theorem}{section}

\title{\bf Gegenbauer polynomials and fluctuation properties of the one-dimensional Riesz gas}
\author{}
\date{}

\author{Peter J. Forrester}
\date{}

\begin{document}

\maketitle

School of Mathematics and Statistics,  The University of Melbourne,
Victoria 3010, Australia. \: \: Email: {\tt pjforr@unimelb.edu.au}\\

\bigskip

\begin{abstract}
The Riesz gas in one-dimension consists of particles interacting via a pair potential, ${\rm sgn}(s) |x - x'|^{-s}$, $s \ne 0$ and
$-\log | x - x'|$ for $s=0$. In the infinite density limit, with the particle support the interval $[-1,1]$, we apply a functional derivative method due to Beenakker to compute the covariance of two smooth linear statistics for the Riesz gas with
exponent $s \in (-1,1)$, $s \ne 0$. This we give in terms of a sum over Fourier components of the linear statistics with respect to a Gegenbauer polynomial $\{C_n^{(s/2)}(x) \}$ basis, which generalises a known form in the case $s=0$ involving a cosine expansion. For the power sum linear statistic, our general formula can be reduced to a product of gamma function form, and compared against recent exact results in the literature for this case.
\end{abstract}

\begin{center}
{\it Dedicated to the life and achievements on exactly solved models of Rodney James Baxter}
\end{center}

  \setcounter{section}{-1}
 \section{Prologue}
 At the recommendation of my Honours and Masters year thesis supervisor E.R.~Smith, I applied to study for my PhD under Rodney's supervision by writing him a letter to that effect in late 1982. As it turned out, Kenneth Wilson was awarded the Nobel Prize for physics around that time for his contribution to the renormalisation group theory of the phase transitions. Rodney was brought down from Canberra to Melbourne by the Physics School here to present a public lecture on this ---
 indeed the primary motivation for Rodney'a field of exactly solved models in statistical mechanics was in relation to phase transitions, and his discovery of continuously varying critical exponents in the eight-vertex model presented a particular challenge.
 He took the opportunity to arrange an interview with me, which I evidently passed as he announced that he'd be pleased to take on the role as my PhD supervisor. While much could be said about what I observed in regard to his research over the next two years (in particular his celebrated collaboration with George Andrews on what is now often termed the ABF solid-on-solid model
 \cite{ABF84}, involving generalisations of the Rogers-Ramanujan identities, and his exact evaluation of the free energy of the three-dimensional Zamolodchikov model \cite{Ba84}), 
 I'll mention these only in the context of what Rodney gave to me to work on. In addition, I'd like to take this opportunity
 say a few words of a more personal nature.

 From the beginning Rodney (and his wife Elizabeth) treated me like  family, with frequent invitations to his home for dinners, often also in the company of high standing visitors
 (Douglas Abraham, George Andrews, Michael Barber, Larry Glasser, Barry McCoy, Jan Oitmaa, Derek Robinson, David Ruelle, David Thouless,
 $\dots$). Morning tea in the aptly named tea room, served by the now extinct tea lady employees, was an essential part of the day. The period after morning tea and before lunch was no doubt one of much productivity and creativity --- having a ``clear head'' was a self described terminology I remember him using --- typically recorded on dated and numbered notepad paper. Excellence was an academic quality that he put to the fore,  as was understated hard work
 (remember the importance of morning tea),
 and on this he very much led by example. Similar remarks could be made with respect to having grace and good humour.
 
 In his own time as a PhD student at the ANU in Australia, after his well known stint working for the Iraq Petroleum Company at the end of his Cambridge degree, Rodney found for himself a  thesis topic 
 on the statistical mechanics of gases. During  that time, among other things, he worked out the exact solution of the one-dimensional Coulomb plasma (one mobile species, immersed in a neutralising background)
 \cite{Ba63}. While this was topical in mathematical physics, due to recent work of Lenard on the corresponding two-species one-dimensional Coulomb system \cite{Le61}, it
 was in a different direction to the field theory topics suggested to him by his PhD supervisor
 Kenneth Le Couteur. 
 
 It so happened that my MSc thesis from the year before beginning  at ANU related to exactly solvable properties of the one-component plasma in the two-dimensional case (logarithmic potential). In keeping with his own PhD experience, Rodney was only too happy that I continue working on this topic at least some of the time. But for the reality of where a PhD in exact solutions in statistical mechanics may lead --- something that was already highlighted in Rodney's letter responding to my initial PhD inquiry --- he also encouraged work on what he knew as high profile problems in his field of exactly solved lattice models. First I was assigned  to some (relatively minor) theta function identities required for what was soon to become an important work \cite{ABF84}, then I was given the more substantial task of extending this exact solution to 
 include a further parameter \cite{FB85}. Simultaneous there was a numerical project relating to the extension of the three-dimensional Zamolodchikov model \cite{BF85}, based on a three-dimensional extension of Rodney's famed corner transfer matrices \cite[Ch.~13]{Ba82}, crucial to both \cite{ABF84,FB85}. Then I was to learn that this idea had its origin in a much earlier numerical study of Rodney \cite{Ba68}. In his reply letter to my inquiry, Rodney had identified computing skills in relation to flexibility in the job market.
 Exact numerical explorations, performed by running programs written in Fortran, was a common method Rodney used to progress his research.

 Rodney officially retired his ANU position at the end of 2002, just before this 63rd birthday. While there may have been several forces at work in this decision (including his appeal to classical Roman traditions of completion being associated with multiples of 7), one perhaps lesser known one that came up in a tearoom conversation was his respect for one of his Cambridge teachers, the mathematical physicist John Polkinghorne \cite[\S 6]{Ba15}. The latter, after 25 years in science,   decided he had done his bit for the discipline, and took on the challenge of a second career (theology). I speculate that Rodney felt he had done his bit for ANU. In his research, he had for over a decade been toiling at a derivation of the conjectured spontaneous magnetisation formula for chiral Potts model. It struck me that Rodney had retired from his ANU position somewhat defeated by at least this one aspect  of what had become his major research theme (I hasten to add that his contribution generally to the study of chiral Potts model was already substantial at that stage by any measure; see e.g.~the earlier references cited in \cite{Ba06}).
 However, while Rodney had retired from his ANU position, his research continued, and around two years later he indeed found the proof that he had long been  seeking \cite{Ba05a,Ba05b}.

\section{Introduction and paper outline}
The present work addresses  a particular fluctuation problem for the one-dimensional classical statistical mechanical model, specified by a long-rangle Riesz potential
\begin{equation}\label{1.1}
\Phi_s(x,x') = \begin{cases} - | x - x'|^{-s}, & -2 < s < 0 \\
- \log | x - x'|, & s = 0 \\
|x - x'|^{-s}, & 1 > s > 0. 
\end{cases}
\end{equation}
Here the upper bound $1 > s$ separates the cases that the tail of $\Phi_s(x,x')$ is integrable; when this is not integrable is the regime of a long-range potential. The lower bound $ s > - 2$ is required for positivity of the Fourier transform, which in turn is necessary for screening and thus thermodynamic stability,
the latter assuming  too the presence of a neutralising background
\cite{Le22}, \cite[\S{1.2}]{BF25}. In our study, it turns out that the assumption of a neutralising background plays no role. A setting relating to this class of 
Riesz gases where this does play a role is that of determining the global density profile; references on this include
\cite{LS17,HLSS18,A+19,K+21,BFMS26}.

The fluctuation problem to be considered is that of the $N \to \infty$ covariance 
for two smooth and real linear statistics $F_N = \sum_{j=1}^N f(x_j)$,  $G_N = \sum_{j=1}^N g(x_j)$, when the gas is confined to a finite interval
taken to be $[-1,1]$. Established in the literature (see e.g.~the review \cite[\S 3.1]{Fo23}) is that for the logarithmic case $(s=0)$,
\begin{equation}\label{1.2}
\lim_{N \to \infty} {\rm Cov} (F_N,G_N) = {2 \over \beta} \sum_{n=1}^\infty n f_n^{\rm c} g_n^{\rm c}, \quad  f_n^{\rm c} := {1 \over \pi} 
\int_0^\pi f(\cos x) \cos(n x) \, dx,
\end{equation}
and similarly the meaning of $g_n$. A more recent result \cite{FMS23}, \cite{Be23} is that in the case $s=1$
(one-dimensional Coulomb potential)
\begin{equation}\label{1.3}
\lim_{N \to \infty} {\rm Cov} (F_N,G_N) = {1 \over 2 \beta} \int_{-1}^1 f'(x) g'(x) \, dx.
\end{equation}

Let $\rho_{(2),N}^T(x,x')$ denote the truncated two-point correlation function. Use this and the one-body density
$\rho_{(1),N}(x)$  to specify the density-density correlation
\begin{equation}\label{1.3a}
{\sf N}_{(2),N}(x,x') = \rho_{(2),N}^T(x,x') +
\rho_{(1),N}(x) \delta(x-x').
\end{equation}
For finite $N$ this quantity relates to the covariance by the simple formula (see e.g.~\cite[Prop.~2.1]{Fo23})
\begin{equation}\label{1.3b}
{\rm Cov}(F_N,G_N) = \int_{-1}^1 dx \int_{-1}^1 dx' \, f(x) g(x')
{\sf N}_{(2),N}(x,x').
\end{equation}
Using (\ref{1.3b}) as our starting point, our main task then is to calculate the large $N$ form of the smoothed density-density correlation in a form which makes the double integral explicit (the smoothing comes about since
${\sf N}_{(2),N}(x,x')$ is integrated over in both variables).
For this we will make use of the functional derivative formalism due to Beenakker \cite{Be93}, which we
revise in \S \ref{S2.1}. From a mathematical viewpoint, this a physical heuristic, albeit a well founded one which is expected to be exact in the setting of statistical mechanical systems with a long range potential.

As a result of our working, we are able to express the covariance
for a certain range of $s$
 in a form analogous to
(\ref{1.2}). Perhaps surprisingly, the Gegenbauer polynomials play an essential role.

\begin{proposition}\label{P1.1}
Set
\begin{equation}\label{hl}
\quad h_n = {\pi 2^{1 -s} \Gamma(n +s) \over n! (n + s/2) \Gamma^2(s/2) }, \quad 
\lambda_n = { \pi \Gamma(n+ s) \over \Gamma(s) \Gamma(n+1) \cos {\pi s \over 2}}.
\end{equation}
In the case of the pair potential (\ref{1.1}) for $|s| < 1$,
$s \ne 0$,
we have (assuming convergence of the series)
\begin{align}\label{2.3a}
\lim_{N \to \infty} {\rm Cov} (F_N,G_N) & =
{ {\rm sgn}(s)  \over \beta}  \sum_{n=1}^\infty {h_n \over  \lambda_n } f_n^{\rm G}  g_n^{\rm G} \nonumber \\
& =  { {\rm sgn}(s)  \over \beta} {  \Gamma(s) \cos {\pi s \over 2}  \over  2^{s-1} \Gamma^2(s/2)}
\sum_{n=1}^\infty {1 \over (n + s/2) } f_n^{\rm G}  g_n^{\rm G},
\end{align}
where, with $C_n^{(\nu)}(x) $ denoting the Gegenbauer polynomials in standard notation,
\begin{equation}\label{2.3b}
f_n^{\rm G} := {1 \over h_n} \int_{-1}^1 f(x) (1 - x^2)^{(s-1)/2} 
C_n^{(s/2)}(x) \, dx,
\end{equation}
and similarly the meaning of $g_n^{\rm G}$.
\end{proposition}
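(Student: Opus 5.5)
We follow Beenakker's functional derivative formalism, which expresses the smoothed density-density correlation as the linear response of the one-body density to an applied external potential. Add to the Boltzmann factor a perturbation $e^{-\beta\sum_j \delta V(x_j)}$ with $\delta V = -\beta^{-1}\epsilon g$. Then ${\rm Cov}(F_N,G_N) = \beta^{-1}\,\partial_\epsilon \langle F_N\rangle|_{\epsilon=0}$ exactly, by differentiating the partition function. For large $N$ the density is determined to leading order by the continuum energy-minimisation (electrostatic) condition
\begin{equation*}
\int_{-1}^1 \Phi_s(x,x')\,\rho_{(1)}(x')\,dx' + V(x) = \text{const}, \qquad x\in(-1,1).
\end{equation*}
The support is kept fixed at $[-1,1]$, which is consistent with the hard-wall confinement, and the total number of particles is conserved. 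Hence the first-order density change $\delta\rho$ has zero mass and satisfies
\begin{equation*}
\int_{-1}^1 \Phi_s(x,x')\,\delta\rho(x')\,dx' = \beta^{-1}\epsilon\, g(x) + c .
\end{equation*}
Consequently ${\rm Cov}(F_N,G_N) \to \beta^{-1}\int_{-1}^1 f\,\delta\rho/\epsilon$. The entropy correction is lower order since the potential is long range for $s<1$, and the $\beta$-dependence is purely through this prefactor.

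The key step is to invert the Riesz integral operator on $[-1,1]$. For this we use the classical spectral relation, which is a known formula (e.g.\ from potential theory, or checkable via the Funk--Hecke type identity):
\begin{equation*}
\int_{-1}^1 \frac{(1-x'^2)^{(s-1)/2}\,C_n^{(s/2)}(x')}{|x-x'|^{s}}\,dx' = \lambda_n\, C_n^{(s/2)}(x), \qquad 0<s<1,
\end{equation*}
with $\lambda_n$ as in (\ref{hl}). The analogue for $-1<s<0$ follows from the same formula, and analytic continuation in $s$ produces the sign ${\rm sgn}(s)$ once the definition (\ref{1.1}) is taken into account. I would verify the eigenvalue by taking $x\to$ leading power: compare the coefficient of $x^n$ on both sides using the beta integral. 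This is the main obstacle, and I would cite it, e.g.\ from the literature on Riesz equilibrium measures, rather than rederive it.

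Next, expand $g(x) = \sum_n g_n^{\rm G} C_n^{(s/2)}(x)$. The coefficients are given by (\ref{2.3b}), since the Gegenbauer polynomials are orthogonal with respect to the weight $(1-x^2)^{(s-1)/2}$ with norms $h_n$. The $n=0$ term is absorbed into the constant $c$, which enforces zero mass, so that
\begin{equation*}
\delta\rho(x) = \frac{{\rm sgn}(s)\,\epsilon}{\beta}\,(1-x^2)^{(s-1)/2}\sum_{n\ge1}\frac{g_n^{\rm G}}{\lambda_n}\,C_n^{(s/2)}(x).
\end{equation*}
Integrating against $f$ and using orthogonality again produces $\int f\,(1-x^2)^{(s-1)/2}C_n^{(s/2)} = h_n f_n^{\rm G}$. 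This gives the first line of (\ref{2.3a}).

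The second line is simply the simplification of the ratio
\begin{equation*}
\frac{h_n}{\lambda_n} = \frac{2^{1-s}\Gamma(s)\cos(\pi s/2)}{(n+s/2)\,\Gamma^2(s/2)},
\end{equation*}
in which the factors $\Gamma(n+s)/n!$ cancel. We would check the $s\to0$ limit against (\ref{1.2}) as a consistency test.
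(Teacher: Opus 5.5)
Your proposal is correct and is essentially the paper's argument. Both use Beenakker's linear response formalism, the screening (Euler--Lagrange) condition (\ref{2.4}) with the zero-mass constraint (\ref{2.5}), diagonalisation of the Riesz kernel by the cited Gegenbauer eigenvalue relation (\ref{2.10}), and orthogonality; the only cosmetic difference is that you contract directly with $f$ and $g$ instead of first writing down ${\sf N}_{(2),\infty}$ as in Corollary~\ref{C2.3}.

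Do fix the $\beta$ bookkeeping. With $\delta V=-\beta^{-1}\epsilon g$ the weight is $e^{\epsilon\sum_j g(x_j)}$, so ${\rm Cov}(F_N,G_N)=\partial_\epsilon\langle F_N\rangle|_{\epsilon=0}\to\int f\,\delta\rho/\epsilon$ with no extra factor $\beta^{-1}$; as written, your $\delta\rho\propto\epsilon/\beta$ would give $\beta^{-2}$. Also, $\lambda_n$ is pinned down most easily by setting $x=1$ in (\ref{2.10}), as the paper does, rather than by comparing leading coefficients.
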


In comparison to (\ref{2.3a}), the formula for the covariance given in \cite{Be23} is more complicated in appearance.
On the other hand,
it does not require the bounding interval to be symmetric about the origin, although the integrals therein may not be tractable in the general case.
A specific case where they were shown to be tractable
and a closed form obtained was that of
the bounding interval $[0,L]$, and the monomial functions
$f(x) = x^p$ and $g(x) = x^q$, $p,q \in \mathbb Z^+$.
However only for $p=q=1$ is this (take $L=2$) equivalent to the case of the bounding interval equalling $[-1,1]$. 

Suppose we specialise the covariance for the monomials to a variance by setting $p=q$.
Then it has recently been made explicit in \cite[Appendix B]{LS25} that this with the bounding
interval $[-1,1]$ can be written as a linear combination of
covariances with $f(x) = x^{p_1}$ and $g(x) = x^{p_2}$ in the case of the interval $[0,2]$. 
Moreover, using a method distinct from that in \cite{Be23}, this
reference \cite[see Appendix B for an overview]{LS25}
provided an evaluation of the monomial variance formula for 
the case of a symmetric bounding interval for all
$p$ even. For small even $p$ at least, it was checked that the linear combination of the covariances from \cite{Be23} can be be reduced to this same closed form expression, although a general demonstration appears out of reach.

 As an application of (\ref{2.3a}), for the symmetric interval $(-1,1)$ we show in
\S \ref{S2.5} how to reclaim the evaluation formula 
in \cite{Be23} for the monomial variance in the case $p=1$, and
the evaluation formula 
in \cite{LS25} for the monominal variance with $p$ even.  Moreover we obtain an analogous formula in the $p$ odd case.

Our presentation begins in \S \ref{S2.1} with an overview of linear response/ functional derivative strategy of Beenakker for the computation of the $N \to \infty$ smoothed density-density
correlation as appears in the covariance formula (\ref{1.3b}).
Specifically the linear response argument leads to a particular integral equation, (\ref{2.4}) below, the solution to which determines (as a distribution) the smoothed density-density
correlation by functional differentiation. It is shown in
\S \ref{S2.2}  how to solve this in the case $s=0$ of
(\ref{1.1}) (logarithmic potential) and to continue on to deduce
(\ref{1.2}). The case $s=-1$ (linear potential)
of (\ref{2.4}) is shown in
\S \ref{S2.3} to yield to a differentiation strategy based on the one-dimensional Poisson equation. The cases $s \in (-1,1)$,
$s \ne 0$ are considered in \S \ref{S2.4}, where the working begins by deriving an expansion for Riesz potential as a bilinear sum of Gegenbauer polynomials. The use of this to solve the integral equation (\ref{2.4}). After functional differentiation, in Corollary \ref{C2.3} we present a formula for the smoothed
density-density
correlation in terms of further bilinear sum of Gegenbauer polynomials, from which Proposition \ref{P1.1} follows.
In Appendix A 

Before getting underway with this working, we draw to the attention of the reader several other works which relate to fluctuation properties of the Riesz gas of a that have appeared in the recent literature. One is that of a central limit theorem for certain singular linear statistics of the Riesz gas with periodic boundary conditions and $s \in (0,1)$ \cite{Bo21}.
This result includes the some explicit variance formulas. The reference \cite{TLS25} also considers the periodic boundary conditions model, now in what may be considered as the low temperature limit, when the interaction potential can be approximated by a certain multivariate Gaussian, and several results pertaining to fluctuations are obtained. In the specific case of an harmonic confining potential, a study of edge fluctuation properties is given in \cite{K+22}. The work
\cite{DKM23} considers the case of dynamical fluctuations of the Riesz gas due to overdamped Brownian motion.

\section{Computation of the variance for bounding interval $[-1,1]$}
\subsection{The strategy of Beenakker}
\label{S2.1}
In this subsection we will revise the formalism of \cite{Be93} for the computation of the smoothed, large $N$ form of the density-density correlation. We adopt an approach to this formalism using linear response as presented in \cite[\S 14.3.1]{Fo10}.
Introduce the microscopic density $n_{(1),N}(x) := \sum_{j=1}^N \delta(x - x_j)$. Suppose the system is perturbed by the arbitrary one-body potential 
$\delta U = \sum_{j=1}^N u(x_j)$. For an observable $A$, the change in its mean value due to the perturbation is to leading order in $\delta U$ given by (see 
\cite[Eq.~(14.1)]{Fo10})
\begin{equation}\label{2.1}
\langle A \rangle_\epsilon - \langle A \rangle_0 = - \beta 
\langle A \delta U \rangle_0^T.
\end{equation}
Here the subscripts $\epsilon$ and 0 indicate the presence and absence, respectively, of the perturbation.
For the choice $A = n_{(1),N}(x')$, (\ref{2.1}) reads
\begin{equation}\label{2.2}
\langle n_{(1),N}(x') \rangle_\epsilon - \langle n_{(1),N}(x') \rangle_0 = - \beta \int_{-1}^1 u(x) {\sf N}_{(2),N}(x,x') \, dx.
\end{equation}
Functional differentiation then gives
\begin{equation}\label{2.3}
{\delta \over \delta u(x)} \Big ( 
\langle n_{(1),N}(x') \rangle_\epsilon - \langle n_{(1),N}(x') \rangle_0 \Big ) = - \beta {\sf N}_{(2),N}(x,x'),
\end{equation}
so our task is to obtain an expression for the LHS of (\ref{2.2}) that is well suited to computing the required functional derivative.

For this, under the assumption of a long-range potential
$\Phi_s(x,x')$,
and in the large $N$, infinite density limit, it is hypothesised that the density difference as appears in (\ref{2.2}) is such that it couples with $\Phi_s(x,x')$ to effectively cancel the potential $u(x)$ (an equivalent viewpoint is that it can be regarded as a background charge density of opposite sign, with $u(x)$ the resulting potential; see \cite[\S 1.1 and 1.2]{BF25},
\begin{equation}\label{2.4}
-\int_{-1}^1 \Phi_s(x,x') 
\Delta n_{(1),\infty}(x') 
 \, dx' = u(x) + C, \quad \Delta n_{(1),\infty}(x'):= \langle n_{(1),\infty}(x') \rangle_\epsilon - \langle n_{(1),\infty}(x') \rangle_0,
 \end{equation}
 for $x \in (-1,1)$.
Here the constant $C$ is determined by the particle conservation condition
\begin{equation}\label{2.5}
\int_{-1}^1 \Delta n_{(1),\infty}(x')  \, dx' = 0.
\end{equation}
Thus the task at hand is to solve (\ref{2.4}) for
the density difference, given $u(x)$. Again we emphasise that the solution has to be in a form that facilitates the computation of the functional derivative.

\subsection{Implementation for $s=0$}\label{S2.2}
The case $s=0$ in (\ref{1.1}) is a distinct functional form from that of the cases $s \ne 0$. In this case the solution of (\ref{2.4}) has been given in \cite[Exercises 14.3 q.1]{Fo10}. It is instructive to revise the required working, and to show how this leads to (\ref{1.2}) (in the cited reference, a variation on (\ref{1.2}) is deduced;
see the first expression in
\cite[Eq.~(14.56)]{Fo10}).

\begin{proposition}\label{P2.1}
Let $x \in (-1,1)$.
The solution of the integral equation
\begin{equation}\label{N1}
\int_{-1}^1 \log | x - x'|  \Delta n_{(1),\infty}(x')  \, dx' =
u(x) + C,
\end{equation}
where $C$ is such that (\ref{2.5}) holds, is given by the cosine expansion
\begin{equation}\label{2.6}
 \Delta n_{(1),\infty} (\cos \theta) = - {2 \over \pi^2 \sin \theta}
 \sum_{p=1}^\infty p \Big ( \int_0^\pi u (\cos \sigma) \cos p \sigma \, d \sigma \Big ) \cos p \theta.
\end{equation}
\end{proposition}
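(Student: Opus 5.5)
The plan is to use the classical Chebyshev expansion of the logarithmic kernel on $[-1,1]$. With $x=\cos\theta$ and $x'=\cos\phi$,
\begin{equation*}
\log|\cos\theta-\cos\phi| = -\log 2 - \sum_{p=1}^\infty {2\over p}\cos p\theta\cos p\phi .
\end{equation*}
This follows by writing $|\cos\theta-\cos\phi| = 2|\sin\tfrac{\theta+\phi}{2}\sin\tfrac{\theta-\phi}{2}|$ and using $\log|2\sin\tfrac{t}{2}| = -\sum_{p\ge1}\cos(pt)/p$. I would substitute the ansatz $\Delta n_{(1),\infty}(\cos\phi) = {1\over \sin\phi}\sum_{p\ge 0} a_p \cos p\phi$ into (\ref{N1}). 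The change of variables $dx' = \sin\phi\,d\phi$ then cancels the weight $1/\sin\phi$, and the left-hand side of (\ref{N1}) becomes
\begin{equation*}
\int_0^\pi \log|\cos\theta-\cos\phi| \sum_{p\ge0} a_p\cos p\phi\, d\phi = -\pi a_0\log 2 - \pi\sum_{p\ge1} {a_p\over p}\cos p\theta ,
\end{equation*}
by orthogonality, $\int_0^\pi \cos p\phi\cos q\phi\,d\phi = {\pi\over 2}\delta_{pq}$ for $p\ge1$.

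Next I would expand the right-hand side in the same basis. Write $u(\cos\theta) = {u_0\over 2}+\sum_{p\ge1} u_p\cos p\theta$ with $u_p = {2\over\pi}\int_0^\pi u(\cos\sigma)\cos p\sigma\,d\sigma$. Matching coefficients of $\cos p\theta$ for $p\ge 1$ gives $-\pi a_p/p = u_p$, that is,
\begin{equation*}
a_p = -{2p\over \pi^2}\int_0^\pi u(\cos\sigma)\cos p\sigma\,d\sigma .
\end{equation*}
The constant term only fixes $C$ in terms of $a_0$ and $u_0$. The conservation condition (\ref{2.5}) reads $\int_0^\pi \sum_p a_p\cos p\phi\,d\phi = \pi a_0 = 0$, so it forces $a_0=0$, and $C$ is then determined as $C=-u_0/2$. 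Substituting these $a_p$ into the ansatz reproduces (\ref{2.6}) exactly.

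Finally, I would address uniqueness, so that this is \emph{the} solution. If $\Delta n$ solves the homogeneous problem with zero total mass, then the energy $\int\int \log|x-x'|\,\Delta n(x)\Delta n(x')\,dx\,dx'$ equals, via the kernel expansion and the vanishing of the $p=0$ mode, $-\sum_p {\pi^2\over 2p}|a_p|^2$. This is zero only if all $a_p=0$. The $1/\sin\theta$ factor admits the edge-singular solutions of the form $(1-x^2)^{-1/2}$ (the equilibrium measure of the log kernel). These are exactly the $a_0$ mode, which is eliminated by (\ref{2.5}).

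The main obstacle is justifying the interchange of the infinite sums with the integral. The coefficients $a_p$ grow like $p\,u_p$, so the series in (\ref{2.6}) converges only in a distributional sense unless $u$ is smooth. I would assume $u$ is smooth, so that $u_p$ decays rapidly and everything converges absolutely. Alternatively, one can treat the identity termwise, as is consistent with the paper's heuristic distributional use of $\Delta n$ before functional differentiation.
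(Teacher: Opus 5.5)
Your proposal is correct and follows the paper's proof: substitute $x=\cos\theta$, use the cosine expansion (\ref{2.6b}) of the logarithmic kernel, expand $u(\cos\theta)$ in cosines, match the coefficients of $\cos p\theta$, and use (\ref{2.5}) to remove the constant mode (your $a_p$ equal $2/\pi$ times the moments in (\ref{2.6c})). The differences are minor: you posit the cosine-series form of $\sin\phi\,\Delta n_{(1),\infty}(\cos\phi)$ as an ansatz, whereas the paper reads off its cosine coefficients directly from the equation (so uniqueness comes for free), and you add an explicit energy argument for uniqueness that the paper leaves implicit.
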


\begin{proof}
The substitutions $x = \cos \theta$, $x' = \cos \sigma$, gives the rewrite of (\ref{N1}),
\begin{equation}\label{2.6a}
\int_{0}^\pi \log | \cos \theta  - \cos \sigma|  \delta n_{(1),\infty}(\cos \sigma)  \sin \sigma \, d\sigma =
u(\cos \theta) + C.
\end{equation}
In relation to the kernel on the LHS, we have the cosine
expansion \cite{PS90}, \cite[Exercises 1.4 q.4]{Fo10}
\begin{equation}\label{2.6b}
 \log(2 | \cos \theta  - \cos \sigma|) = - 
 \sum_{p=1}^\infty {2 \over p} \cos p \theta \cos p \sigma.
\end{equation}

Recalling the condition (\ref{2.5}) we see that the RHS of
(\ref{2.6b}) can be substituted for the kernel in (\ref{2.6a}). Doing this, performing a cosine expansion of $u(\cos \theta)$, making further use of the condition (\ref{2.5}), and equating coefficients of $\cos p \sigma$ shows that for each $p=0,1,\dots$
\begin{equation}\label{2.6c}
\int_0^\pi (\cos p \sigma ) \Delta n_{(1),\infty}(\cos \sigma)  \sin \sigma \, d\sigma = - {p \over \pi} 
\int_0^\pi u (\cos \sigma) \cos p \sigma \, d \sigma.
\end{equation}
Apart from a factor of $2/\pi$, the LHS of (\ref{2.6c}) are the coefficients in the cosine expansion of $\Delta n_{(1),\infty}(\cos \sigma)  \sin \sigma$. Replacing these coefficients according to (\ref{2.6c}) gives (\ref{N1}).
    
\end{proof}

Changing variables as in (\ref{2.6a}) in (\ref{2.3})
allows us to compute from (\ref{2.6c}) that
\begin{equation}\label{2.6d}
\beta {\sf N}_{(2),\infty}(\cos \theta,\cos \sigma) =
- {1 \over \pi^2 \sin \theta \sin \sigma}
{\partial^2 \over \partial \theta^2} 
\log | \cos \theta - \cos \sigma |,
\end{equation}
where use has been made of (\ref{2.6b}). Similarly changing variables in (\ref{1.3b}) with $N \to \infty$, we see by substituting (\ref{2.6d}) and making further use of (\ref{2.6b}) that
the fluctuation formula (\ref{1.2}) results.

\subsection{The case $s=-1$}\label{S2.3}
Recalling (\ref{1.1}), the case $s=-1$ of (\ref{2.3}) reads
\begin{equation}\label{2.7}
\int_{-1}^1 | x - x'| \Delta n_{(1), \infty}(x') = u(x) + C.
\end{equation}
Using the fact that ${\partial^2 \over \partial x^2} | x - x'| = 2 \delta(x-x')$, it follows that for $x \in (-1,1)$,
$\Delta n_{(1), \infty}(x') = {1 \over 2} {d^2 \over d x^2} u(x')$ and consequently
\begin{equation}\label{2.8}
{\delta \Delta n_{(1), \infty}(x') \over \delta u(x)} =
{1 \over 2} {d^2 \over d x^2} \delta (x - x').
\end{equation}
Substituting this in (\ref{1.3b}) with $N \to \infty$, we
reclaim (\ref{1.3}).

\subsection{Proof of Proposition \ref{P1.1}}\label{S2.4}
Our main task is to solve the integral equation (\ref{2.4})
with the Riesz potential (\ref{1.1}) in a form suitable to taking the functional derivative required in (\ref{2.3}). 
Let $h_n, \lambda_n$ be specified as in (\ref{hl}).
With $C_n^{(\alpha)}(x)$ denoting the Gegenbauer polynomials in usual notation, which have the orthogonality
\begin{equation}\label{2.9}
\int_{-1}^1 (1 - y^2)^{(s - 1)/2}
C_n^{(s/2)}(y) C_m^{(s/2)}(y) \, dy =
h_n \delta_{n,m}. 
\end{equation}
Crucial to our working is
a particular integral operator eigenvalue equation for these polynomials
known  in the literature  \cite{MA90},
\cite[Eq.~(6.2)]{FAD99} (see \cite[\S 7]{Ab01} for a self contained derivation), giving that for $|s| < 1$, and $|u| \le 1$,
\begin{equation}\label{2.10}
\int_{-1}^1  {(1 - y^2)^{(s - 1)/2} \over| u - y|^s} 
C_n^{(s/2)}(y) \, dy = \lambda_n C_n^{(s/2)}(u). 
\end{equation}
The LHS of (\ref{2.10}) is, up to proportionality, equal to the Fourier coefficients for the orthogonal basis
$\{  C_n^{(s/2)}(y) \}$, inner
product $(f_1,f_2) := \int_{-1}^1 (1 - y^2)^{(s - 1)/2} f_1(y) f_2(y) \, dy$,
 of the
function $| u - y|^{-s}$ in the variable $y$. Hence
(\ref{2.10}) is equivalent to the expansion
\begin{equation}\label{2.11a}
{1 \over | u - y|^s} = 
\sum_{n=0}^\infty
{\lambda_n \over h_n}  C_n^{(s/2)}(u) C_n^{(s/2)}(y), \quad |u|,|y|, \le 1, \: \: |s| < 1;
\end{equation}
cf.~(\ref{2.6b}).
In Appendix A, a further discussion relating to (\ref{2.10}) is given, which includes a self contained derivation.

We can use (\ref{2.11a}) to deduce the analogue of 
Proposition \ref{P2.1} in the case of the potentials
in (\ref{1.1}) with $|s| < 1$, $s \ne 0$.

\begin{proposition}\label{P2.1a}
Let $x \in (-1,1)$ and $|s| <1$, $s \ne 0$.
The solution of the integral equation
\begin{equation}\label{U1}
- {\rm sgn}(s) \int_{-1}^1 {1 \over  | x - x'|^s}  \Delta n_{(1),\infty}(x')  \, dx' =
u(x) + C,
\end{equation}
where $C$ is such that (\ref{2.5}) holds, is given by
\begin{multline}\label{U2}
 \Delta n_{(1),\infty}(x) = - {\rm sgn}(s)
 \\
\times (1 - x^2)^{(s - 1)/2}
 \sum_{n=1}^\infty{1 \over h_n \lambda_n} \Big (
 \int_{-1}^1 (1 - y^2)^{(s-1)/2} C_n^{(s/2)}(y) u(y) \, dy \Big )  C_n^{(s/2)}(x).
\end{multline}
\end{proposition}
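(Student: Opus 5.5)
The plan mirrors the proof of Proposition \ref{P2.1}, with the cosine expansion (\ref{2.6b}) replaced by the bilinear Gegenbauer expansion (\ref{2.11a}). First I will make the ansatz that $\Delta n_{(1),\infty}(x) = (1-x^2)^{(s-1)/2} \phi(x)$, with $\phi$ expanded in the orthogonal basis as $\phi(x) = \sum_{n\ge 0} a_n C_n^{(s/2)}(x)$. By (\ref{2.9}), and since $C_0^{(s/2)}=1$, the conservation condition (\ref{2.5}) reads $a_0 h_0 = 0$, so $a_0=0$. I will also expand $u(y) = \sum_{n\ge0} u_n C_n^{(s/2)}(y)$, where
\[
u_n = h_n^{-1}\int_{-1}^1 (1-y^2)^{(s-1)/2} C_n^{(s/2)}(y) u(y)\,dy .
\]

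Next I substitute the ansatz into the left-hand side of (\ref{U1}). Applying the eigenvalue relation (\ref{2.10}) termwise, with $u \mapsto x$ and $y\mapsto x'$, gives
\[
\int_{-1}^1 |x-x'|^{-s} (1-x'^2)^{(s-1)/2} \phi(x')\,dx' = \sum_{n\ge 1} a_n \lambda_n C_n^{(s/2)}(x).
\]
Equivalently, I can insert (\ref{2.11a}) and integrate using the orthogonality (\ref{2.9}). The $n=0$ term is absent because $a_0=0$, and this is what leaves the constant $C$ free. Thus (\ref{U1}) becomes
\[
-{\rm sgn}(s)\sum_{n\ge1} a_n\lambda_n C_n^{(s/2)}(x) = \sum_{n\ge0} u_n C_n^{(s/2)}(x) + C .
\]

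Then I equate coefficients in the Gegenbauer basis. The $n=0$ coefficient gives $C=-u_0$. For $n\ge1$ it gives $a_n = -{\rm sgn}(s)\, u_n/\lambda_n$. Recalling $u_n = h_n^{-1}\int (1-y^2)^{(s-1)/2} C_n^{(s/2)} u$, substituting back into the ansatz yields exactly (\ref{U2}). I also need $\lambda_n \neq 0$ for $n\ge1$. This holds because $\Gamma(n+s)$, $\Gamma(s)$ and $\cos(\pi s/2)$ are finite and nonzero for $|s|<1$, $s\ne0$.

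The main obstacle is justifying the ansatz, namely the uniqueness of the solution and the termwise interchange of sum and integral. Since the paper treats (\ref{2.4}) at the level of a physical heuristic, I will handle this by the same formal argument as in Proposition \ref{P2.1}. I will note that the weighted space $L^2((1-x^2)^{(s-1)/2}dx)$ is complete in the Gegenbauer basis, and that the operator in (\ref{2.10}) is diagonal there with nonvanishing eigenvalues, hence injective. This makes the solution unique within the class where the series converge, and convergence will be assumed as in Proposition \ref{P1.1}.
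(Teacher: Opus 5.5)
Your proposal is correct and is essentially the paper's argument: your ansatz $\Delta n_{(1),\infty}(x)=(1-x^2)^{(s-1)/2}\sum_n a_n C_n^{(s/2)}(x)$ is the expansion (\ref{2.11d}) that the paper uses to reconstruct the density, and applying (\ref{2.10}) termwise is equivalent to the paper's substitution of the bilinear expansion (\ref{2.11a}) followed by equating coefficients. Your relation $a_n=-{\rm sgn}(s)\,u_n/\lambda_n$ also correctly carries the factor $-{\rm sgn}(s)$, which the paper omits in its intermediate equation (\ref{2.11c}) but restores in (\ref{U2}).
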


\begin{proof}
    We expand $u(x)$ in terms of $\{ C_n^{(s/2)}(x) \}$,
  \begin{equation}\label{2.11b}  
   u(x) = \sum_{n=0}^\infty {1 \over h_n} 
  \Big (  \int_{-1}^1 (1 - y^2)^{(s-1)/2} C_n^{(s/2)}(y) u(y) \, dy \Big ) C_n^{(s/2)}(x).
 \end{equation} 
 After substituting both (\ref{2.11a}) and (\ref{2.11b}) in
 (\ref{U1}) and recalling the requirement (\ref{2.5}), equating coefficients of $ C_n^{(s/2)}(x)$ shows that for
 $n=1,2,\dots$
  \begin{equation}\label{2.11c}  
\lambda_n  \int_{-1}^1  C_n^{(s/2)}(y) 
\Delta n_{(1),\infty}(y) \, dy =\int_{-1}^1 (1 - y^2)^{(s-1)/2} C_n^{(s/2)}(y) u(y) \, dy.
\end{equation}

On the other hand, substituting $ (1 - x^2)^{-(s-1)/2}
\Delta n_{(1),\infty}(x)$ for $u(x)$ in
(\ref{2.11b}), and making use of (\ref{2.5}) gives
 \begin{equation}\label{2.11d}  
  (1 - x^2)^{-(s-1)/2}
\Delta n_{(1),\infty}(x)   = \sum_{n=1}^\infty {1 \over h_n} 
  \Big (  \int_{-1}^1  C_n^{(s/2)}(y) \Delta n_{(1),\infty}(y) \, dy \Big ) C_n^{(s/2)}(x).
 \end{equation}
 Using (\ref{2.11c}) to substitute for the integral on the RHS gives 
 (\ref{U2}).
  
\end{proof}

The computation of the functional derivative with respect to $u(y)$ as required by
(\ref{2.3}) is immediate.

\begin{corollary}\label{C2.3}
In the case of the pair potential $|x-y|^{-s}$ for $|s| < 1$,
$s \ne 0$,
we have
 \begin{equation}\label{N3}
\beta {\sf N}_{(2),\infty}(x,y) = {\rm sgn}(s) (1 - x^2)^{(s-1)/2}
(1 - y^2)^{(s-1)/2} \sum_{n=1}^\infty {1 \over h_n \lambda_n}
C_n^{(s/2)}(x) C_n^{(s/2)}(y).
 \end{equation}
 (This holds in the sense of distribution, and needs to be integrated against test functions to be well defined.)
\end{corollary}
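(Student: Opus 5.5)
The plan is to take the functional derivative of the explicit solution (\ref{U2}) of Proposition \ref{P2.1a} with respect to $u(y)$, and to combine it with the linear response identity (\ref{2.3}) in the limit $N \to \infty$. The key structural fact is that (\ref{U2}) expresses $\Delta n_{(1),\infty}(x)$ as a \emph{linear} functional of $u$:
\[
\Delta n_{(1),\infty}(x) = \int_{-1}^1 K(x,y)\, u(y)\, dy,
\]
where
\[
K(x,y) = -{\rm sgn}(s)\,(1-x^2)^{(s-1)/2}(1-y^2)^{(s-1)/2}\sum_{n=1}^\infty {1\over h_n\lambda_n} C_n^{(s/2)}(x)C_n^{(s/2)}(y).
\]
This form comes from interchanging the sum over $n$ with the $y$-integral defining the Gegenbauer coefficients of $u$. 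For a linear functional, the functional derivative $\delta/\delta u(y)$ is just the kernel $K(x,y)$, with no dependence on $u$. In particular, the constant $C$ in (\ref{U1}) drops out, since only $n\ge 1$ terms survive, consistent with (\ref{2.5}).

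Equating this with (\ref{2.3}), which gives $-\beta{\sf N}_{(2),\infty}(x',x)$, cancels the minus sign. The symmetry of the kernel in $x,y$ then yields (\ref{N3}) with the arguments in either order. I would remark that ${\sf N}_{(2)}$ is symmetric by definition (\ref{1.3a}), so this symmetry serves as a consistency check.

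The main obstacle is justifying the interchange of summation and integration and giving meaning to the resulting series. Using $h_n \sim n^{s-2}$ and $\lambda_n \sim n^{s-1}$, we get $1/(h_n\lambda_n) \sim n^{3-2s}$, while $C_n^{(s/2)}$ grows like $n^{s/2-1}$ in the interior of $(-1,1)$. The series therefore diverges pointwise and defines ${\sf N}_{(2),\infty}$ only as a distribution; this is the sense stated in the corollary. I would handle this by pairing with smooth test functions $f,g$. Their weighted Gegenbauer coefficients $f_n^{\rm G}$ decay faster than any power of $n$, by repeated use of the Gegenbauer differential equation and integration by parts, which is legitimate since $s/2>-1/2$. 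The paired series $\sum_n h_n f_n^{\rm G} g_n^{\rm G}/\lambda_n$ then converges absolutely. The identity is thus established as an identity of bilinear forms on test functions, and this is exactly what is needed in (\ref{1.3b}).

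Finally, to pass from the corollary to Proposition \ref{P1.1}, I would substitute (\ref{N3}) into (\ref{1.3b}) and use the definition (\ref{2.3b}) of $f_n^{\rm G}$, which absorbs the weights $(1-x^2)^{(s-1)/2}$. This produces $h_n^2 f_n^{\rm G} g_n^{\rm G}/(h_n\lambda_n)$. Simplifying the ratio $h_n/\lambda_n$ with (\ref{hl}) gives the second form in (\ref{2.3a}).
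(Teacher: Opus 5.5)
Your proposal is correct and follows the paper's route. The paper likewise obtains $\beta{\sf N}_{(2),\infty}$ by reading off the kernel of the linear functional (\ref{U2}) and matching it through (\ref{2.3}), a step it calls immediate. Your additional justification of the distributional sense (the rapid decay of $f_n^{\rm G}$, via the symmetry of $\mathcal L$ with respect to the weight $(1-x^2)^{(s-1)/2}$ for $s>-1$, giving absolute convergence of the paired series) is sound and fills in detail the paper leaves implicit.
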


The covariance formula of Proposition \ref{P1.1} follows immediately from (\ref{N3}), upon substitution in (\ref{1.3b}).

\begin{remark}
The integral equation (\ref{U1}), with ${1 \over 2} u(x)$ interpreted as an external potential and $\Delta n_{(1),\infty}(x)$ as the global density $\rho_{(1)}^{\rm G}(x)$, is
the Euler-Lagrange equation for the latter; see
e.g.~\cite[Eq.~(1.7)]{BFMS26}. One notes now the normalisation
$\int_{-1}^1 \rho_{(1)}^{\rm G}(x') \, dx'$ rather than the requirement (\ref{2.5}). In the case $u(x) = 0$, and the particles confined to $(-1,1)$ (box wall potential), the method of solution leading to (\ref{U2}) gives
 \begin{equation}\label{N3b}
 \rho_{(1)}^{\rm G}(x) \propto (1 - x^2)^{(s-1)/2}
 \mathbbm 1_{|x|<1};
\end{equation}
see \cite[Remark 4]{BFMS26} for more on this.

\end{remark}

\subsection{The variance of the power sum linear statistic}\label{S2.5}
Let us denote
 \begin{equation}\label{J1}
 \sigma_p^2 := \lim_{N \to \infty} {\rm Cov}(F_N,G_N) \Big |_{f = g = x^p},
 \end{equation}
 which is the limiting variance of the linear statistic given by the power
sum $\sum_{j=1}^N x^p$. Since $x^p$ can be written as a linear combination of $\{ C_n^{(s/2)}(x) \}_{n=0}^p$, it follows that only the first $p$ terms in (\ref{2.3a}) contribute to $\sigma_p^2$ --- in fact this is further reduced by a factor of 2 due to parity considerations.

The simplest case in this class is $p=1$, for which the power sum linear statistic has the interpretation as the centre of mass. Since $x = {1 \over s}  C_1^{(s/2)}(x)$, we read off from
(\ref{2.3a}) that
 \begin{equation}\label{J2}
 \sigma_1^2 = {{\rm sgn}(s) \over s^2 \beta} {h_1 \over \lambda_1} = {{\rm sgn}(s) \over s^2 \beta} {\cos (\pi s/2)
 \Gamma((s+1)/2) \over \sqrt{\pi} (1+s/2) \Gamma(s/2)},
 \end{equation}
 where use has been make of the duplication formula for the gamma function.
 This same quantity has been computed in \cite[Eq.~(4.1) with $L=2$,
 $N^2 J = 1$ --- the translational invariance of $\sigma_1^2$ makes only the length of the bounding interval relevant, not its position on the line]{Be23}. After use of the reflection identity for the gamma function, the formula given there, and our formula, are seen to be identical. It is noted in \cite{Be23} that $|s| \sigma_1^2$ as specified by (\ref{J2})
 is analytic and positive throughout the exponent range $1 > s > -2$, which led to the conjecture of its correctness for $-1 > s > -2$ in (\ref{1.1})
 (recall that the validity of Proposition \ref{P1.1} requires
 $|s| < 1$, $s \ne 0$).

 Quite remarkably, the general formula of \cite{Be23} corresponding to our (\ref{2.3a}), notwithstanding its complexity, was shown 
 \cite[Eq,~(3.4)]{Be23}
 to also give rise to a gamma function evaluation of $\sigma_p^2$ in the case $p > 1$. However, as discussed in the Introduction,  an important detail is that the bounding interval is taken to be $[0,L]$ which prohibits direct comparison with the value of $\sigma_p^2$ obtained from
 our (\ref{2.3a}). Nonetheless, as also  discussed in the Introduction,
 direct comparison is possible in the case
  $p$ even from a result presented in \cite{LS25}. In our notation, and after some minor manipulation involving the gamma function duplication identity, this result reads
\begin{equation}\label{J3a}
\sigma_p^2 = \sigma_2^2 \, 2^{2(p-2)} \Big ( {p \over 2} \Big )^3
\Big ( {2 + s \over p + s} \Big ) \alpha_p(s), \quad p \: \:{\rm even},
\end{equation}
where
\begin{equation}\label{J3b}
\sigma_2^2 = {1 \over  \beta \sqrt{\pi}}
  {\cos {\pi s \over 2} \Gamma((s+1)/2) \over |s| (2+s)  \Gamma({s \over 2} + 3)},
\end{equation}
and
\begin{equation}\label{J3c}
\alpha_p(s) = {2^{-2p + 8} \Gamma^2((p+1)/2) \Gamma({s \over 2} + 1) \Gamma({s \over 2} + 3)  \over
\pi p (2p + s) \Gamma({p+s \over 2} + 1)  \Gamma((p+s) / 2 ) }.
\end{equation}


We have the challenge then of obtaining (\ref{J2}) and
(\ref{J3a}) starting with (\ref{2.3a}), and further to obtain
an evaluation formula for $\sigma_p^2$ for $p>1$ odd.
To accomplish this, we require the expansion of a monomial in terms of Gegenbauer polymomials
  \cite{Ra60}
  \begin{equation}\label{J4} 
  x^p = {p! \over 2^p} \sum_{k=0}^{[p/2]}
  {\nu + p - 2k \over k! (\nu)_{p+1-k}}
  C_{p-2k}^{(\nu)}(x),
   \end{equation}
   where $(a)_k$ denotes the rising Pocchammer symbol.
   Substituting this with $\nu = s/2$ in (\ref{2.3a}) gives
  \begin{equation}\label{J5}
 \sigma_p^2 = { {\rm sgn}(s)  \over \beta} {  \Gamma(s) \cos {\pi s \over 2}  \over  2^{s-1} \Gamma^2(s/2)}
\Big ( { p! \over 2^p } \Big )^2 \sum_{k=0}^{[(p-1)/2]} 
  {s/2 + p - 2k \over (k! (s/2)_{p+1-k})^2}
   \end{equation}
   (the expression for the upper terminal in the summation has been changed relative to (\ref{J4}) to account for the term
   $n=0$ not being present in (\ref{2.3a})).

   The following result shows that the summation in (\ref{J5}) can be evaluated.
\begin{lemma}
For $|s|<1$, $s \ne 0$ we have
 \begin{equation}\label{J6}
\sum_{k=0}^{n} 
  {s/2 + p - 2k \over (k! (s/2)_{p+1-k})^2} =
  {1 \over (n!)^2} {1 \over s/2 + p}
  {1 \over ((s/2)_{p-n})^2}.
\end{equation}
\end{lemma}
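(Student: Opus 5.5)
Induction on $n$ (with $p$ and $s$ fixed) is the natural route, since the right-hand side of (\ref{J6}) should be the partial sum of a telescoping series. Write $a=s/2$ and let $S_n$ denote the left side of (\ref{J6}). The base case $n=0$ is immediate: the single term is $(a+p)/((a)_{p+1})^2$, and since $(a)_{p+1}=(a)_p(a+p)$ this equals $1/((a+p)((a)_p)^2)$, which is the claimed right side $R_0$.

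For the inductive step it suffices to show $R_n-R_{n-1}$ equals the $k=n$ term, i.e.
\begin{equation*}
{1\over (n!)^2 (a+p)((a)_{p-n})^2}-{1\over ((n-1)!)^2(a+p)((a)_{p-n+1})^2}={a+p-2n\over (n!)^2((a)_{p+1-n})^2}.
\end{equation*}
To check this, multiply through by $(n!)^2((a)_{p+1-n})^2(a+p)$ and use $(a)_{p+1-n}=(a)_{p-n}(a+p-n)$ and $(a)_{p+1-n}=(a)_{p-n+1}$. The identity then becomes $(a+p-n)^2-n^2=(a+p)(a+p-2n)$, which holds as a difference of squares. The telescoping is therefore a one-line algebraic check once the Pochhammer ratios are written out, and the induction closes.

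The only care needed is to keep the indices of the Pochhammer symbols straight: $(a)_{p+1-k}$ appears in the summand while $(a)_{p-n}$ appears in the closed form. Nonzero denominators also need checking. For $|s|<1$, $s\neq 0$, the quantity $a$ is not a nonpositive integer, so every $(a)_m\neq 0$. Moreover $a+p\neq 0$ for $p\ge 1$, since $|a|<1/2$. Hence all expressions are well defined and no further analytic obstacle arises; the statement is a purely rational identity in $a$. Substituting $n=[(p-1)/2]$ then gives the closed form needed in (\ref{J5}).
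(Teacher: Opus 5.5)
Your proof is correct and takes the same route as the paper: induction on $n$, with the inductive step reducing to the quadratic identity $(a+p-n)^2-n^2=(a+p)(a+p-2n)$. The paper phrases this step as $(n+1)^2+(a+p)(a+p-2n-2)=(a+p-n-1)^2$, after factoring out $1/\big((n!)^2((a)_{p-n})^2\big)$. Your explicit check that no denominator vanishes for $|s|<1$, $s\neq 0$ is a small addition that the paper leaves implicit.
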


\begin{proof}
    We see that the proposed formula is correct for $n=0$.
    Proceeding now by induction, we have that the sum for $n \mapsto n+1$ is equal to
    \begin{multline}
    {1 \over (n!)^2} {1 \over s/2 + p}
  {1 \over ((s/2)_{p-n})^2} +  
  {s/2 + p - 2(n+1) \over ((n+1)! (s/2)_{p-n})^2} \\ =
    {1 \over (n!)^2} {1 \over ((s/2)_{p-n})^2} \bigg (
  {1 \over s/2 + p} +    {s/2 + p - 2(n+1) \over (n+1)^2}
  \bigg ).
    \end{multline}
    Simplification of the final factor on the RHS gives agreement
    with the RHS of (\ref{J6}) for $n \mapsto n+1$, as required.
\end{proof}
   
Use of (\ref{J6}) in (\ref{J5}) shows that our formalism, as for that of \cite{Be23}, gives a product of gamma function evaluation of $\sigma_p^2$. Due to the appearance of
$[(p-1)/2]$ there is need to distinguish the cases $p$ odd and $p$ even.

\begin{corollary}\label{C2.5}
 We have
 \begin{equation}\label{J7a}
 \sigma_p^2 = {{\rm sgn}(s) \over \pi \beta}  {\Gamma(s) \cos {\pi s \over 2} \over 2^{s-2} (s + 2p) } \bigg (
 {\Gamma({p \over 2} + 1) \over \Gamma((s+p+1)/2) } \bigg )^2,
  \quad p \: \:{\rm odd},
 \end{equation}
 and
 \begin{equation}\label{J7b}
 \sigma_p^2 = {{\rm sgn}(s) \over \pi \beta}  {\Gamma(s) \cos {\pi s \over 2} \over 2^{s-2} (s + 2p) } \bigg ( 
 {(p/2) \Gamma(( p  + 1)/2)  \over \Gamma({s+p \over 2} + 1) } \bigg )^2,
  \quad p \: \:{\rm even}.
 \end{equation}
\end{corollary}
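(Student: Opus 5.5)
The plan is to substitute the summation identity (\ref{J6}) directly into (\ref{J5}), with upper terminal $n=[(p-1)/2]$, and then rewrite the surviving Pochhammer symbol as a ratio of gamma functions. By (\ref{J6}),
\begin{equation*}
\sigma_p^2 = { {\rm sgn}(s)  \over \beta} {  \Gamma(s) \cos {\pi s \over 2}  \over  2^{s-1} \Gamma^2(s/2)}
\Big ( { p! \over 2^p } \Big )^2 {1 \over (n!)^2} {2 \over s + 2p} {1 \over ((s/2)_{p-n})^2}, \qquad n = [(p-1)/2].
\end{equation*}
Since $(s/2)_{p-n} = \Gamma(s/2 + p - n)/\Gamma(s/2)$, the factor $\Gamma^2(s/2)$ cancels against the prefactor. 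This leaves
\begin{equation*}
\sigma_p^2 = {{\rm sgn}(s) \over \beta}{\Gamma(s)\cos{\pi s\over 2} \over 2^{s-2}(s+2p)} \Big( {p! \over 2^p\, n!\, \Gamma(s/2+p-n)} \Big)^2 ,
\end{equation*}
so what remains is to evaluate $p!/(2^p n!)$ for each parity of $p$.

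For $p$ odd, write $p=2m+1$, so $n=m$ and $s/2+p-n = (s+p+1)/2$, which already gives the denominator in (\ref{J7a}). The gamma duplication formula gives $(2m+1)! = \Gamma(2m+2) = 2^{2m+1}\pi^{-1/2}\Gamma(m+1)\Gamma(m+3/2)$. Dividing by $2^p m!$ yields $p!/(2^p n!) = \pi^{-1/2}\Gamma(m+3/2) = \pi^{-1/2}\Gamma(p/2+1)$. Squaring produces the factor $1/\pi$ and the bracket in (\ref{J7a}).

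For $p$ even, write $p=2m$, so $n=m-1$ and $s/2+p-n = (s+p)/2+1$, which matches the denominator in (\ref{J7b}). Duplication gives $(2m)! = 2^{2m}\pi^{-1/2}\Gamma(m+1/2)\, m!$. Hence $p!/(2^p (m-1)!) = \pi^{-1/2}\, m\, \Gamma(m+1/2) = \pi^{-1/2}(p/2)\Gamma((p+1)/2)$. Squaring again gives the stated form.

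The only real care points are bookkeeping. First, the upper terminal must be correct in each parity case; in particular, for $p$ even the $k=p/2$ term, i.e.\ $C_0$, is omitted, which is why $n=m-1$ rather than $m$. Second, the constant $2/(s+2p)$ obtained from $1/(s/2+p)$ combines with $2^{1-s}$ to give $2^{2-s}$. I expect no genuine obstacle. As a check, I would verify that $p=1$ reproduces (\ref{J2}) after using duplication on $\Gamma(s)/\Gamma(s/2)$, and that $p=2$ reproduces (\ref{J3b}). Finally, the ratio $\sigma_p^2/\sigma_2^2$ for even $p$ should agree with (\ref{J3a})--(\ref{J3c}) via the same identities.
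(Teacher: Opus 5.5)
Your proposal is correct and follows the paper's own route: substitute (\ref{J6}) with $n=[(p-1)/2]$ into (\ref{J5}), cancel $\Gamma^2(s/2)$ using $(s/2)_{p-n}=\Gamma(s/2+p-n)/\Gamma(s/2)$, then apply the duplication formula to $p!/(2^p n!)$ separately for odd and even $p$. I checked your bookkeeping, including the omission of the $C_0$ term for even $p$ and the $p=1,2$ consistency checks against (\ref{J2}) and (\ref{J3b}), and it is accurate.
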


Straightforward manipulation of the expression (\ref{J3a}) from
\cite{LS25} shows that it can be written as in (\ref{J7b}), thus showing that our formula (\ref{2.3a}) reproduces this result from the existing literature. We have previously noted via
(\ref{J2}) that the case $p=1$ of (\ref{J7a}) reproduces a known result from \cite{Be23}. For $p>1$ the result (\ref{J7a}) is new.

As noted in both \cite{Be23} and \cite{LS25}, taking the limit $s \to -1$ provides a useful check, as then we have the alternative, very simple formula (\ref{1.3}). This latter formula tells us
 \begin{equation}\label{J8}
 \sigma_p^2 \Big |_{s=-1} = {1 \over \beta } {p^2 \over 2 p -1}.
  \end{equation}
  Indeed taking the limit   $s \to -1$ in both (\ref{J7a}) and
  (\ref{J7b}) gives precisely this result.

  \begin{remark} ${}$ \\
  1.~The leading large $p$ form of (\ref{J8}) is by inspection
  $p/(2 \beta)$. Beyond the case $s=-1$,, we have from
  Corollary \ref{C2.5} that the leading large $p$ form is given by
   \begin{equation}\label{J9}
   {{\rm sgn}(s) \cos {\pi s \over 2} \over 4 \pi \beta} 
   \Big ( {2 \over p} \Big )^s.
   \end{equation} 
   This diverges for $s<0$, and goes to zero for $s > 0$. 
   One notes that for $p \to \infty$ the power sum linear statistic is concentrated about $x = \pm 1$, so this result suggests two distinct edge behaviours depending on the sign of $s$. \\
   2.~The variance of the linear statistic $f(x) = \Phi_s(x,y)$ as given by (\ref{1.1}) for $|s|<1$ ($s \ne 0$) and $y \in [-1,1]$ also exhibits contrasting behaviours depending on
   the sign of $s$. According to (\ref{2.3a}) and
   (\ref{2.11a}) we have 
   \begin{align}\label{2.3aE}
\lim_{N \to \infty} {\rm Var} (F_N) & =
{ {\rm sgn}(s)  \over \beta}  \sum_{n=1}^\infty {\lambda_n \over  h_n } ( C_n^{(s/2)}(y))^2 \nonumber \\
& =  { {\rm sgn}(s)  \over \beta} {  \Gamma(s) \cos {\pi s \over 2}  \over  2^{s-1} \Gamma^2(s/2)}
\sum_{n=1}^\infty (n+s/2) ( C_n^{(s/2)}(y))^2.
\end{align}
There are explicit gamma function formulas available for
$C_n^{(s/2)}(y)$ at $y=\pm 1,0$. From these we have that 
(\ref{2.3aE}) diverges for $s> 0$, and converges for $s<0$. Note that for $s \to -1$ this latter conclusion is consistent with
(\ref{1.3}). which gives $\lim_{N \to \infty} {\rm Var} (F_N) =
1/\beta$, independent of $y$. Also, for the case $s=0$ of
(\ref{1.1}) as the linear statistic, and when the Riesz gas is the log-gas, it is well known that the variance diverges for
$y \in [-1,1]$ \cite{DL14,SF25}. \\
   3.~As with (\ref{J2}), the general $p$ expressions
   (\ref{J7a}) and (\ref{J7b}) multiplied by $|s|$ are analytic and positive throughout the entire exponent range $1 > s > -2$, which leads to the conjecture of a corresponding extension of their validity beyond that stated in the corollary.

   \end{remark}

   \section*{Acknowledgements}
The work of the author is supported by a grant from the Australian Research Council, Discovery Project
DP250102552.

   \appendix
\section*{Appendix A: On the integral operator eigenvalue equation (\ref{2.10})}
\renewcommand{\thesection}{A} 
\renewcommand{\theHsection}{A} 
\setcounter{equation}{0}
\setcounter{theorem}{0}

Introduce the differential operator
 \begin{equation}\label{K1}
 \mathcal L = (1 - x^2) {d^2 \over d x^2} - (s+1) x {d \over d x}.
 \end{equation}
 As is well known, up to normalisation, the Gegenbauer polynomials $C_n^{(s/2)}(x)$ are the unique polynomial eigenfunctions of the eigenvalue equation
 \begin{equation}\label{K2}
 \mathcal L \phi_n(x) = - n (n + s) \phi_n(x), \quad n \in \mathbb N_0.
  \end{equation}
  From the theory of the so-called Gegenbauer fuctions (second kind solutions of the differential equation (\ref{K2})
  \cite{DFS76}, an alternative characterisation is that
  $C_n^{(s/2)}(x)$ is the unique (up to normalisation) solution of
  (\ref{K2}) which is bounded on $[-1,1]$.

  Now introduce the integral operator
  \begin{equation}\label{K3}
  \mathcal K[f](u) := \int_{-1}^1  {(1 - y^2)^{(s - 1)/2} \over| u - y|^s} 
f(y) \, dy, \quad |u| \le 1.
 \end{equation}
 In the special case $s=1/2$, in the context of a mathematical physics problem relating to the impenetrable Bose gas in one-dimension with Dirichlet boundary conditions, the commutation identity between the differential and integral operators $[\mathcal L, \mathcal K]=0$ was established. Since commuting operators have the same eigenfunctions, this was used to deduce the validity of (\ref{2.10}) 
 \cite{FFG03}
 (the then very recently established result of
 \cite{FAD99}
 giving (\ref{2.10}) for all $|s|<1$ was unknown to the authors
 of \cite{FFG03}, one of whom is author of the present work), from which the case $g=1/2$ of (\ref{2.11a}) was deduced.
 One notes that commuting differential and integral operators appeared long before in mathematical physics in the context of
 random matrix theory \cite{Ga61}.
 In this Appendix, the aim is to give a derivation of 
 (\ref{2.10}) based on theory relating to a further topic in mathematical physics, namely that of Schr\"odinger operators of Calogero-Sutherland as characterised by a one on distance squared pair potential (see e.g.~\cite[Ch.~11]{Fo10}).
 For this we follow ideas put forward in the thesis
 \cite{At16}.

 Define $\psi_0(x) = (1 - x^2)^{(s-1)/4}$, and introduce the conjugated differential operator $\mathcal L^{\rm c} := \psi_0(x)
 \mathcal L \psi_0^{-1}(x)$. In the variable $x = \cos \theta$,
 $0 < \theta < \pi$, this has the simple Calogero-Sutherland type (in a single variable) Schr\"odinger operator form
 \begin{equation}\label{K1a}
 \mathcal L^{\rm c} \Big |_{x = \cos \theta} := 
 \mathcal L^{\rm c}_\theta = 
 - {d^2 \over d \theta^2} + {s \over 2} \Big ( {s \over 2} - 1 \Big ) {1 \over\sin^2 \theta}.
 \end{equation}
 It follows from  (\ref{K2})
 that the square integrable eigenfunctions (wave functions) of (\ref{K1a})
 are 
  \begin{equation}\label{K1b}
  \psi_n(\theta) = (\sin \theta)^{s/2} 
  C_n^{(s/2)}(\cos \theta), \quad s > - 1,
  \end{equation} 
  with corresponding eigenvalues
  \begin{equation}\label{K1c}
  \nu_n(s) := (s/2)^2 + n (n+s).
  \end{equation}

  An observation of \cite{At16} is the so-called kernel function identity
   \begin{equation}\label{K1d}
   (  \mathcal L^{\rm c}_\theta  -  \mathcal L^{\rm c}_\phi)
   K(\theta, \phi;g) 
  =0, \quad 
K(\theta, \phi;g) =  {(\sin \theta \sin \phi)^{s/2} \over | \cos \theta - 
   \cos \phi |^s},
   \end{equation}
   which can be verified directly. We can use this fact to give a derivation of (\ref{2.10}).

   \begin{proposition}
     Let $\lambda_n$ be as in (\ref{hl}). For $|s| < 1$ we have
       \begin{equation}\label{K1e}
   C_n^{(s/2)}(\cos \phi) = {\lambda_n}
   \int_0^\pi {1 \over |\cos \theta - 
   \cos \phi |^s} (\sin \theta)^g C_n^{(s/2)}(\cos \theta) \, d \theta.
   \end{equation} 
   (This is (\ref{2.10}) in trigonometric form.)
   \end{proposition}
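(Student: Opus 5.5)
Throughout I work with \eqref{2.10} in trigonometric form. Substituting $y=\cos\theta$ gives $dy=\sin\theta\,d\theta$ and $(1-y^2)^{(s-1)/2}\sin\theta=(\sin\theta)^s$, so the exponent $g$ in (\ref{K1e}) should be read as $s$. Note also that (\ref{2.10}) places $\lambda_n$ on the other side, so the constant to be identified is $\lambda_n$ in the sense of (\ref{2.10}). Set
\begin{equation*}
I_n(\phi):=\int_0^\pi K(\theta,\phi;g)\,\psi_n(\theta)\,d\theta=(\sin\phi)^{s/2}\int_0^\pi \frac{(\sin\theta)^s\,C_n^{(s/2)}(\cos\theta)}{|\cos\theta-\cos\phi|^s}\,d\theta ,
\end{equation*}
with $\psi_n$ as in (\ref{K1b}). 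The claim is then equivalent to $I_n=\lambda_n\psi_n$. The strategy has three steps. First, show that $I_n$ is an eigenfunction of $\mathcal L^{\rm c}_\phi$ with eigenvalue $\nu_n(s)$. Second, use endpoint behaviour to identify it with a multiple of $\psi_n$. Third, fix the multiple by evaluating at a convenient point.

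For the first step I apply $\mathcal L^{\rm c}_\phi$ under the integral sign. The kernel identity (\ref{K1d}) lets me replace it by $\mathcal L^{\rm c}_\theta$ acting on $K$. Integrating by parts twice then moves this operator onto $\psi_n$, and (\ref{K1c}) gives $\mathcal L^{\rm c}_\phi I_n=\nu_n(s) I_n$. Two kinds of boundary terms must vanish.
\begin{itemize}
\item At $\theta=0,\pi$, both $K$ and $\psi_n$ behave like $\theta^{s/2}$ (respectively $(\pi-\theta)^{s/2}$). The Wronskian-type combination $K\partial_\theta\psi_n-\psi_n\partial_\theta K$ therefore cancels at leading order, and the next order is $o(1)$ for $s>-1$.
\item At $\theta=\phi$, $K\sim|\theta-\phi|^{-s}$. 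For $-1<s<0$ both $K$ and $\partial_\theta K$ are locally integrable, and the identity holds in the weak sense with no contribution from the diagonal.
\end{itemize}
So I would first prove the eigenvalue relation for $-1<s<0$. I would then extend it to $0<s<1$ by analytic continuation in $s$. This uses that both sides of (\ref{K1e}) are analytic in $s$ on $|s|<1$ for fixed $\phi\in(0,\pi)$: the integral converges locally uniformly since $s<1$, and $C_n^{(s/2)}$ is polynomial in $s$. The point $s=0$ is removable after the obvious normalisation, or it can simply be excluded.

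For the second step, $I_n$ solves the second-order equation $\mathcal L^{\rm c}_\phi F=\nu_n(s)F$ on $(0,\pi)$. Hence $I_n$ is a combination of $\psi_n$ and a second solution behaving like $\phi^{1-s/2}$ near $\phi=0$. Directly from its definition, $I_n(\phi)=O(\phi^{s/2})$ as $\phi\to0$, since the remaining integral is bounded and continuous up to $\phi=0$; the same holds near $\pi$. Equivalently, $(\sin\phi)^{-s/2}I_n(\phi)$ is a bounded solution of (\ref{K2}). By the uniqueness statement quoted from \cite{DFS76}, it is a multiple of $C_n^{(s/2)}(\cos\phi)$. This step is routine except at the exponent coincidence $s/2=1-s/2$, that is $s=1$, which is excluded.

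For the third step, I divide by $(\sin\phi)^{s/2}$ and let $\phi\to0$. Using $C_n^{(s/2)}(1)=(s)_n/n!$, the constant equals
\begin{equation*}
\frac{n!}{(s)_n}\int_{-1}^1 (1-y)^{-s}(1-y^2)^{(s-1)/2}C_n^{(s/2)}(y)\,dy .
\end{equation*}
I would evaluate this by writing $C_n^{(s/2)}$ through its Rodrigues formula and integrating by parts $n$ times onto $(1-y)^{-s}$. This reduces the integral to a Beta integral $\int_{-1}^1(1-y)^{(s-1)/2-s}(1+y)^{n+(s-1)/2}\,dy$ times $(s)_n$. Simplifying with the reflection and duplication formulas should produce $\lambda_n$ of (\ref{hl}), including the factor $1/\cos(\pi s/2)$.

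I expect the main obstacle to be the rigorous justification of the first step: the operator transfer across the diagonal singularity and the analytic continuation to $0<s<1$. If the continuation argument turns out awkward, the fallback is to regularise the kernel as $|\cos\theta-\cos\phi|^{-s}\to((\cos\theta-\cos\phi)^2+\epsilon^2)^{-s/2}$ and control the error terms as $\epsilon\to0$.
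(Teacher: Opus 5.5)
Your overall route is the paper's: the kernel identity (\ref{K1d}) makes the integral an eigenfunction of $\mathcal L^{\rm c}_\phi$, this is identified with $\psi_n$, and the constant is fixed by letting $\phi\to0$. You are more careful than the paper about boundary terms. Your Rodrigues/Beta evaluation of the constant is correct and yields $\lambda_n$ of (\ref{hl}) using only the reflection formula. You are also right that $g$ should read $s$ and that $\lambda_n$ belongs on the side given in (\ref{2.10}).

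The genuine gap is in your second step. Near $\phi=0$ the exponents of $\mathcal L^{\rm c}_\phi F=\nu_n(s)F$ are $s/2$ and $1-s/2$. For $s<1$ the second is the larger one, so the second solution is $o(\phi^{s/2})$ and the bound $I_n=O(\phi^{s/2})$ excludes nothing. Equivalently, the exponents of (\ref{K2}) at $x=\pm1$ are $0$ and $(1-s)/2>0$, so for $|s|<1$ \emph{every} solution of (\ref{K2}) is bounded on $[-1,1]$. For $n=0$, both $1$ and $\int_0^x(1-t^2)^{-(s+1)/2}\,dt$ are bounded solutions. At $s=0$, both $\cos n\theta$ and $\sin n\theta$ (with $x=\cos\theta$) are. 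Boundedness singles out the polynomial solution only for $s\ge1$, where the ordering of the exponents flips; $s=1$ is the crossover, not an isolated exception. The paper's proof uses the same bounded-solution characterisation, so you inherited this weakness rather than introduced it. Still, as written, the identification with $C_n^{(s/2)}$ does not follow.

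The cheapest repair is parity, which avoids endpoint analysis altogether. Set $J_n(x):=\int_{-1}^1(1-y^2)^{(s-1)/2}|x-y|^{-s}C_n^{(s/2)}(y)\,dy$; the substitution $y\mapsto-y$ gives $J_n(-x)=(-1)^nJ_n(x)$. Equation (\ref{K2}) is invariant under $x\mapsto-x$ and regular at $x=0$. Hence its solutions on $(-1,1)$ of a fixed parity form a one-dimensional space, determined by $y(0)$ or $y'(0)$. Since $C_n^{(s/2)}$ is a nonzero solution of parity $(-1)^n$ for $s\neq0$, your Step 1 forces $J_n=c\,C_n^{(s/2)}$ on $(-1,1)$. $J_n$ is continuous up to $x=1$ (a uniform-integrability check near $y=x\to1$), so your Step 3 then gives $c=\lambda_n$. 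The alternative is to prove directly that $J_n$ has no $(1-x)^{(1-s)/2}$ component at $x=1$, for instance that it is $C^1$ there. That needs a genuine cancellation between the kernel singularity and the endpoint weight, essentially the $n=0$ fact that $J_0$ is constant on $[-1,1]$.
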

   
   \begin{proof}
   For $|s| < 1$ we can use
   $\{\psi_n(\theta) \}$ to expand
   \begin{equation}\label{K1f} 
 K(\theta, \phi;g) = \sum_{n=0}^\infty
   u_n(\phi) \psi_n(\theta), \quad   u_n(\phi) =
 {(\sin \phi)^{s/2} \over h_n} \int_0^\pi {1 \over |\cos \theta - 
   \cos \phi |^s} (\sin \theta)^g C_n^{(s/2)}(\cos \theta) \, d \theta
 \end{equation} 
 (the condition $s > -1$ is required in relation to (\ref{K1b}),
 and the condition $s<1$ is required for the integral in the definition of $u_n(\phi)$ to be well defined).
 Applying (\ref{K1b}) to this tells us that
 \begin{equation}\label{K1g}
  \mathcal L^{\rm c}_\phi  \Big (  (\sin \phi)^{s/2}  u_n(\phi) 
  \Big )=  \nu_n(s)    u_n(\phi).
  \end{equation}  
  Up to proportionality, $(\sin \phi)^{s/2}C_n^{(s/2)}(\cos \phi)$ is the unique solution of this equation such that $u_n(\phi)/(\sin \phi)^{s/2}$
is bounded.
Hence we have  established (\ref{K1e}), up to the value of $\lambda_n$.
For this latter task, we return to the form 
(\ref{2.10}). Setting $u=1$ gives
 \begin{equation}\label{K1h}
\int_{-1}^1  {(1 - y^2)^{(s - 1)/2} \over( 1 - y)^s} 
C_n^{(s/2)}(y) \, dy = \lambda_n {\Gamma(n+ s) \over n! \Gamma(s)},
\end{equation}
where we have made use of the explicit value of $C_n^{(s/2)}(u)$
evaluated at unity. It turns out that the evaluation of an integral including the one in (\ref{K1h}),  in a product of gamma function form, is given in \cite[Eq,~(7.311.3)]{GR07}.
Minor manipulation gives the expression (\ref{hl}) for $\lambda_n$.

\end{proof}

\end{document}